\documentclass[1p,final]{elsarticle}
\usepackage{amsfonts,color,morefloats,pslatex}
\usepackage{amssymb,amsthm, amsmath,latexsym}
\usepackage[pagewise]{lineno}
%\linenumbers
\usepackage{textcomp} % ÓëÏÂÃæµÄ°üÒ»¿éÓÃ£¬·ÀÖ¹ÔÓÖ¾ÆÚ¿¯Í¶¸åÊ±ºò³öÏÖÂÒÂë
\usepackage[latin1]{inputenc}

\newtheorem{theorem}{Theorem}
\newtheorem{lemma}[theorem]{Lemma}
\newtheorem{proposition}[theorem]{Proposition}

\newtheorem{remark}{Remark}

\begin{document}

\begin{frontmatter}

%% Title, authors and addresses

%% use the tnoteref command within \title for footnotes;
%% use the tnotetext command for the associated footnote;
%% use the fnref command within \author or \address for footnotes;
%% use the fntext command for the associated footnote;
%% use the corref command within \author for corresponding author footnotes;
%% use the cortext command for the associated footnote;
%% use the ead command for the email address,
%% and the form \ead[url] for the home page:
%%
%% \title{Title\tnoteref{label1}}
%% \tnotetext[label1]{}
%% \author{Name\corref{cor1}\fnref{label2}}
%% \ead{email address}
%% \ead[url]{home page}
%% \fntext[label2]{}
%% \cortext[cor1]{}
%% \address{Address\fnref{label3}}
%% \fntext[label3]{}

\title{Codebooks from generalized bent
$\mathbb{Z}_4$-valued quadratic forms
}

%% use optional labels to link authors explicitly to addresses:
%% \author[label1,label2]{<author name>}
%% \address[label1]{<address>}
%% \address[label2]{<address>}

\author[Qi]{Yanfeng Qi}
\ead{qiyanfeng07@163.com}
\author[Mesnager]{Sihem Mesnager}
\ead{smesnager@univ-paris8.fr}
\author[Tang]{Chunming Tang
\footnote{Corresponding author: School of Mathematics and Information, China West Normal University;   Email: tangchunmingmath@163.com}}
\ead{tangchunmingmath@163.com}
\address[Qi]{School of Science, Hangzhou Dianzi University, Hangzhou, Zhejiang, 310018, China}
\address[Mesnager]{Department of Mathematics, University of Paris VIII, 93526 Saint-Denis, France, with LAGA UMR 7539,
CNRS, Sorbonne Paris Cit¨¦, University of Paris XIII, 93430 Paris, France, and also with Telecom ParisTech,
75013 Paris, France}
%\cortext[lcj]{Corresponding author}
\address[Tang]{School of Mathematics and Information, China West Normal University, Nanchong, Sichuan,  637002, China, and
  Department of Mathematics, The Hong Kong University of Science and Technology, Clear Water Bay, Kowloon, Hong Kong, China}

\begin{abstract}
Codebooks with small inner-product
correlation have application in
unitary space-time modulations, multiple description coding over erasure channels,
direct spread code division multiple access communications,
compressed sensing, and coding theory. It is interesting to construct codebooks (asymptotically)
achieving the Welch bound or the Levenshtein
bound. This paper presented a class of generalized bent $\mathbb{Z}_4$-valued quadratic forms, which contain  functions of  Heng and Yue (Optimal codebooks achieving the Levenshtein bound from generalized bent functions over $\mathbb{Z}_4$. Cryptogr. Commun. 9(1), 41-53, 2017). By using these
generalized bent $\mathbb{Z}_4$-valued quadratic forms, we constructs optimal
codebooks achieving the Levenshtein bound. These codebooks have parameters  $(2^{2m}+2^m,2^m)$ and  alphabet size $6$.
\end{abstract}

\begin{keyword}
Codebooks \sep
quadratic forms  \sep Galois rings \sep  generalized bent functions.
%% PACS codes here, in the form: \PACS code \sep code

%% MSC codes here, in the form: \MSC code \sep code
%% or \MSC[2008] code \sep code (2000 is the default)
\MSC  94A15 \sep 94A12

\end{keyword}

\end{frontmatter}

\section{Introduction}
Applied in unitary space-time modulations, multiple description coding over erasure channels,
direct spread code division multiple access communications,
compressed sensing, and coding theory \cite{CCKS97,MM93}, an $(N,K)$ codebook  is a
signal set
$(\mathbf{c}_0,\ldots,\mathbf{c}_{N-1})$, where
$\mathbf{c}_0\ldots,\mathbf{c}_{N-1}$ are unit norm $1\times K$ complex vectors over an alphabet. As a performance measure of a codebook in practical applications, the maximum crosscorrelation amplitude of an $(N, K)$ codebook $\mathcal{C}$ is
defined by
$$
I_{max}(\mathcal{C})=max_{0\leq i<j\leq N-1}
(\mathbf{c}_i\mathbf{c_j}^H),
$$
where $\mathbf{c}_j^H$ is the conjugate transpose of $\mathbf{c}_j$.
There are two well known bounds of codebooks:
\begin{itemize}
\item the Welch bound \cite{W74}:
$I_{max}(\mathcal{C})\geq \sqrt{\frac{N-K}{(N-1)K}}$, where $N\geq K$;
\item the  Levenstein bounds
\cite{KL78,L83}:
\begin{itemize}
\item $I_{\max }(\mathcal{C}) \geq \sqrt{\frac{3 N-K^{2}-2 K}{(N-K)(K+2)}}$ for any real-valued codebook, where $N>K(K+1)/2$;
\item $I_{\max }(\mathcal{C}) \geq \sqrt{\frac{2 N-K^{2}-K}{(N-K)(K+1)}}$
    for any complex-valued codebook, where
    $N>K^2$.
\end{itemize}
\end{itemize}
A codebook achieving the Welch bound is also
called a maximum-Welch-bound-equality (MWBE) codebook, which is known as an equiangular tight frame \cite{C03}. The construction of MWBE codebooks is equivalent
to line packing in Grassmannian spaces
\cite{SH03}.
It is difficult to construct  MWBE codebooks
\cite{S99}.
Known $(N,K)$ MWBE codebooks are listed:
\begin{itemize}
\item $(N,N)$ orthogonal MWBE codebooks
\cite{S99,XZG05}, where $N>1$;
\item $(N,N-1)$ MWBE codebooks
from discrete Fourier transformation matrices
\cite{S99,XZG05} or m-sequences \cite{S99}, where $N>1$;
\item $(N, K)$ MWBE codebooks from conference matrices \cite{CHS96,SH03}, where $N=2K$, $N=2^{d+1}$ or
    $N=p^d+1$ for a prime $p$;
\item $(N, K)$ MWBE codebooks from
difference sets in cyclic groups \cite{XZG05} and abelian group \cite{D06,DF07}, or Steiner systems \cite{FMT12}.
\end{itemize}
Almost optimal codebooks asymptotically achieving the Welch bound are constructed in \cite{DF08,HY13,LYH15,ZF09,ZF12}.
When $N>K(K+1)/2$ (resp. $N>K^2$), there are no
real-valued (complex-valued) MWBE codebooks. The  Levenshtein bounds hold for
$N>K(K+1)/2$ or $N>K^2$. Some known
$(N,K)$ codebooks achieving the Levenstein Bound are listed:
\begin{itemize}
\item $(2^{2m-1}+2^m,2^m)$ codebooks with alphabet $4$  \cite{CCKS97,WF89,XDM15,ZDL14},  where $m$ is even;
\item $(p^{2m}+p^m, p^m)$ codebooks with alphabet size $p+2$ \cite{DY07,WF89,XDM15}, where $p$ is odd;
\item  $(2^{2m}+2^m,2^m)$ codebooks with alphabet $6$ \cite{CCKS97,HY17}.
\end{itemize}
These codebooks can  constructed from
binary Kerdock codes \cite{CCKS97,WF89,XDM15}, perfect nonlinear functions \cite{DY07,WF89},  bent functions \cite{ZDL14},
$\mathbb{Z}_4$-Kerdock codes \cite{CCKS97}, and $\mathbb{Z}_4$-valued quadratic forms \cite{HY17}.  It is interesting
to construct optimal codebooks achieving the Levenstein Bound with different
parameters from different methods.
Zhou et al. \cite{ZDL14} presented codebooks
achieving the Levenstein bound from bent functions of the form:
$$
f_{\gamma}\left(x_{1}, x_{2}\right)=\sum_{i=1}^{\frac{m-2}{2}} \operatorname{tr}_{1}^{m-1}\left(x_{1}^{2^{i}+1}\right)+\sum_{i=1}^{\frac{\ell-1}{2}} \operatorname{tr}_{1}^{m-1}\left(\left(\gamma x_{1}\right)^{2^{e i}+1}\right)+x_{2} \operatorname{tr}_{1}^{m-1}\left(x_{1}\right)
$$
where $x_1\in \mathbb{F}_{2^{m-1}}$,
$x_2\in \mathbb{F}_2$, and $\mathrm{tr}_1^{m-1}$ is the trace function from $\mathbb{F}_{2^{m-1}}$ to $\mathbb{F}_2$.
Heng and Yue \cite{HY17} generalized their results to codebooks from generalized bent  $\mathbb{Z}_4$-valued quadratic forms.
Ding et al. \cite{DMTX18} presented the notation of cyclic bent functions, gave a class of cyclic bent functions containing bent functions in
\cite{ZDL14}, and used cyclic bent functions in the construction of good mutually unbiased bases (MUBs), codebooks and sequence families.

Motivated by their methods, this paper generalizes the construction of Heng and Yue
\cite{HY17} and constructs optimal codebooks from a class of generalized bent $\mathbb{Z}_4$-valued quadratic forms.

The rest of the paper is organized as follows.
Section 2 introduces some basic results on
Galois rings, $\mathbb{Z}_4$-valued quadratic forms and codebooks from generalized bent functions. Section 3 presents the construction of optimal codebooks from a class of generalized bent $\mathbb{Z}_4$-valued quadratic forms.  Section 4 makes a conclusion.

\section{Preliminaries}
In this section, we introduce some results on
Galois rings, $\mathbb{Z}_4$-valued
quadratic forms, and codebooks from generalized bent $\mathbb{Z}_4$-quadratic forms.

\subsection{Galois rings}
Let $m$ be a positive integer. Let
$R=\mathbb{GR}(4,m)$ be the Galois ring $\mathbb{Z}_4[x]/(f)$, where
$\mathbb{Z}_4$ is the   ring of integers modulo $4$ and
$f$ is a monic basic irreducible
polynomial of degree $m$ in
$\mathbb{Z}_4[x]$. More results on
Galois rings can be found in
\cite{HK98,HY15,M74,S09}.

Define $F=\{z\in R: z^{2^m}=z\}$, which is called the set of Teichmuller representatives in $R$. Then  any
$z\in R$ can be uniquely represented by  $z=a+2b$, where $a,b\in F$.
For two elements $a,b\in F$,
Define $a\oplus b=(a+b)^{2^m}
=a+b+2\sqrt{ab}$.
Then $(F,\oplus,\cdot)$ is a Galois field of
size $2^m$.

Let $
\mu: \mathbb{Z}_4\longrightarrow
\mathbb{Z}_2$ denote the
modulo-$2$ reduction. This mapping induces the following mapping from $R$ to $\mathbb{F}_{2^m}$
\begin{eqnarray*}
\mu: R&\longrightarrow& \mathbb{F}_{2^m} \\
z&\longmapsto& \mu(z)=\overline{z}
\end{eqnarray*}
Further, $\mu: F\longrightarrow
\mathbb{F}_{2^m}$ is an isomorphism.

The Frobenius automorphism $\sigma$ on
$F$ is given by $\sigma(x)=x^2$. The trace function from $R$ to $\mathbb{Z}_4$ is defined by
\begin{eqnarray*}
\mathrm{Tr}_1^m: R&\longrightarrow&
\mathbb{Z}_4\\
z=a+2b&\longmapsto& \sum_{i=0}^{m-1}
(\sigma^i(a)+2\sigma^i(b))
\end{eqnarray*}
Then $\mathrm{Tr}_1^m(\alpha z_1+
\beta z_2)=\alpha \mathrm{Tr}_1^m(z_1)
+\beta \mathrm{Tr}_1^m(z_2)$, where
$z_1,z_2\in R$ and $\alpha,\beta\in
\mathbb{Z}_4$. Let $\mathrm{tr}_1^m$ be the trace function from $\mathbb{F}_{2^m}$ to
$\mathbb{F}_2$.  For any $x\in F$,  $2\mathrm{Tr}_1^m(x)
=2\mathrm{tr}_1^m(x)$.
The follow lemma \cite{HKCSS94} shows that
$\overline{x_1}\neq \overline{x_2}$
for two different elements $x_1,x_2\in F$.
\begin{lemma}
Let $F^*$ be the set of nonzero elements
of $F$, which is a multiplicative group.
Let  $F^*=\langle \xi  \rangle$.
For $0\leq i<j< 2^m-1$,
$\xi^i\pm \xi^j$ is invertible.
\end{lemma}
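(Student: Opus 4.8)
The plan is to reduce the statement to a computation in the residue field $\mathbb{F}_{2^m}$, using that $R=\mathbb{GR}(4,m)$ is a local ring. Recall that $R$ has the single maximal ideal $2R$ with $R/2R\cong\mathbb{F}_{2^m}$ via the reduction map $\mu$; consequently an element $z\in R$ is a unit if and only if $z\notin 2R$, i.e. if and only if $\mu(z)\neq 0$. Hence it suffices to prove that $\mu(\xi^i\pm\xi^j)\neq 0$ in $\mathbb{F}_{2^m}$.

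Next I would use that $\mu$ is a ring homomorphism whose restriction to $F$ is an isomorphism onto $\mathbb{F}_{2^m}$, as recalled in the preliminaries. In particular $\mu$ restricts to a group isomorphism $F^*\to\mathbb{F}_{2^m}^*$, so $\zeta:=\mu(\xi)$ generates the cyclic group $\mathbb{F}_{2^m}^*$ of order $2^m-1$, and $\mu(\xi^k)=\zeta^k$ for all $k$. Since $\mathbb{F}_{2^m}$ has characteristic $2$, for $0\leq i<j<2^m-1$ we obtain
\[
\mu(\xi^i\pm\xi^j)=\zeta^i\pm\zeta^j=\zeta^i+\zeta^j=\zeta^i\bigl(1+\zeta^{\,j-i}\bigr).
\]

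Finally I would note that $\zeta^i\neq 0$ and that, since $1\leq j-i\leq 2^m-2$ while $\zeta$ has multiplicative order exactly $2^m-1$, we have $\zeta^{\,j-i}\neq 1$, whence $1+\zeta^{\,j-i}\neq 0$. Therefore $\mu(\xi^i\pm\xi^j)\neq 0$, so $\xi^i\pm\xi^j\notin 2R$ is a unit of $R$. I do not expect any genuine obstacle: the only points needing care are the characterization of the units of $\mathbb{GR}(4,m)$ and the compatibility of $\mu$ with multiplication, both standard facts about Galois rings recalled above.
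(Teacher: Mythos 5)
Your proof is correct: reducing modulo the maximal ideal $2R$ (so that invertibility in $R$ is equivalent to nonvanishing of $\mu(z)$ in the residue field) and using that $\mu|_F$ is a multiplicative isomorphism onto $\mathbb{F}_{2^m}$, so that $\zeta=\mu(\xi)$ is a primitive element and $\zeta^i(1+\zeta^{j-i})\neq 0$, is exactly the standard argument. The paper itself offers no proof of this lemma, simply citing \cite{HKCSS94}; your argument is the one used there, so there is nothing to flag.
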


\subsection{$\mathbb{Z}_4$-valued
quadratic forms}

Let
$K=\{z\in \mathbb{Z}_4: z=z^2\}$.
Then any $z\in \mathbb{Z}_4$ can be uniquely
represented by $z=x+2y$, where $x,y \in K$.
An operation $\oplus$ on $K$ can be denoted by
$x\oplus y=(x+y)^2$.
Then, $(K,\oplus,+)$ is the finite field of size $2$.

A symmetric bilinear form on $F$ is a mapping
$B: F\times F\longrightarrow K$ satisfying
\begin{itemize}
\item the symmetry condition:
$B(x,y)=B(y,x)$;
\item  the bilinearity condition:
$B(\alpha x_1\oplus \beta x_2, y)
=\alpha B(x_1,y)+ \beta B(x_2,y)$, where
$\alpha, \beta\in K$.
\end{itemize}
A symmetric bilinear for $B$ is alternating if
$B(x,x)=0$ for all $x\in F$. Otherwise, it is
nonalternating. The radical $rad(B)$ is the set
$$
rad(B)=\{x\in F: B(x,y)=0 ~\text{for all}~y \in F\}.
$$
The radical is a vector space over $K$. The rank of $B$ is defined by
$$
rank(B)=m-dim_K(rad(B)).
$$

A $\mathbb{Z}_4$-valued quadratic form on $F$
\cite{B72} is a mapping  $Q: F\longrightarrow
\mathbb{Z}_4$ satisfying
\begin{itemize}
\item $Q(\alpha x)=\alpha^2 Q(x)$ for $\alpha\in K$;
\item $Q(x\oplus y)=Q(x)+Q(y)+2B(x,y)$, where
$B$ is a symmetric bilinear form.
\end{itemize}
A $\mathbb{Z}_4$-quadratic form $Q$ is alternating if
its associated bilinear form $B$ is alternating. Otherwise, $Q$ is nonalternating. The rank
of $Q$ is the rank of its associated bilinear form. Note that $Q(0)=0$ and
$2Q(x)=2B(x,x)$. The Walsh transform of $Q$ is defined by
$$
\chi_{Q}(\lambda)=\sum_{x \in F}(\sqrt{-1})^{Q(x)+2 \operatorname{Tr}_{1}^{m}(\lambda x)}
$$
where $\lambda\in F$.
The multiset $\{\chi_{Q}(\lambda):
\lambda \in F\}$ depends only on the rank of
$Q$. For an alternating $Q$, the following theorem \cite{HK98} gives
the distribution of values in the multiset
$\{\chi_{Q}(\lambda):
\lambda \in F\}$.
\begin{theorem}\label{thm-x}
Let $Q: F\longrightarrow \mathbb{Z}_4$
be an alternating $\mathbb{Z}_4$-valued quadratic form of rank $r$. The distribution of values in the multiset $\{\chi_{Q}(\lambda):
\lambda \in F\}$ is given in Tabel \ref{t1}.
\begin{table}[htbp]
\centering
\caption{The distribution of values in the multiset $\{\chi_{Q}(\lambda):
\lambda \in F\}$}\label{t1}
\begin{tabular}{|c|c|}
  \hline
  % after \\: \hline or \cline{col1-col2} \cline{col3-col4} ...
  value & frequency \\
\hline
  $0$ & $2^m-2^r$ \\
\hline
  $\pm 2^{m-r/2}$ & $2^{r-1}\pm
2^{r/2-1}$  \\
  \hline
\end{tabular}
\end{table}
\end{theorem}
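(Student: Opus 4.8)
The plan is to reduce Theorem~\ref{thm-x} to the classical description of the Walsh spectrum of a Boolean quadratic form, and then to read the three frequencies off from the first two power moments of $\chi_Q$. First I would use that $Q$ is alternating: since $B(x,x)=0$ for all $x$, the relation $2Q(x)=2B(x,x)$ forces $Q(x)\in\{0,2\}$, so I may write $Q(x)=2q(x)$ with $q\colon F\to\mathbb{F}_2$. Dividing the identity $Q(x\oplus y)=Q(x)+Q(y)+2B(x,y)$ by $2$ gives $q(x\oplus y)=q(x)+q(y)+B(x,y)$ in $\mathbb{F}_2$, so $q$ is an ordinary Boolean quadratic form whose associated (alternating, hence even-rank) bilinear form is $B$; thus $q$ has rank $r$ and $q(0)=0$. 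Because $F$ is closed under the ring multiplication of $R$ and $2\Tr_1^m(\lambda x)=2\operatorname{tr}_1^m(\overline{\lambda x})$, every summand becomes real, namely $(\sqrt{-1})^{Q(x)+2\Tr_1^m(\lambda x)}=(-1)^{q(x)+\operatorname{tr}_1^m(\overline{\lambda x})}$. Transporting everything through the field isomorphism $\mu|_F\colon(F,\oplus,\cdot)\to(\mathbb{F}_{2^m},+,\cdot)$ and setting $g=q\circ(\mu|_F)^{-1}$, $v=\overline\lambda$, one obtains
$$
\chi_Q(\lambda)=\sum_{u\in\mathbb{F}_{2^m}}(-1)^{g(u)+\operatorname{tr}_1^m(uv)}=W_g(v),
$$
the Walsh--Hadamard transform of $g$. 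As $\lambda$ runs over $F$, $v$ runs over $\mathbb{F}_{2^m}$, so $\{\chi_Q(\lambda):\lambda\in F\}$ is precisely the Walsh spectrum of the Boolean quadratic form $g$, which still has rank $r$ (even) and satisfies $g(0)=0$.

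Next I would prove that only the values $0,\pm 2^{m-r/2}$ occur. Squaring and substituting $u'=u+y$, and using $g(u)+g(u+y)=g(y)+\beta(u,y)$ where $\beta$ is the bilinear form of $g$,
$$
W_g(v)^2=\sum_{u,u'}(-1)^{g(u)+g(u')+\operatorname{tr}_1^m(uv)+\operatorname{tr}_1^m(u'v)}=\sum_{y\in\mathbb{F}_{2^m}}(-1)^{g(y)+\operatorname{tr}_1^m(yv)}\sum_{u\in\mathbb{F}_{2^m}}(-1)^{\beta(u,y)}=2^m\!\!\sum_{y\in\operatorname{rad}(\beta)}\!\!(-1)^{g(y)+\operatorname{tr}_1^m(yv)},
$$
since the inner sum is $2^m$ for $y\in\operatorname{rad}(\beta)$ and $0$ otherwise. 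On the $\mathbb{F}_2$-space $\operatorname{rad}(\beta)$, which has dimension $m-r$, both $y\mapsto g(y)$ (because $g(y+y')=g(y)+g(y')+\beta(y,y')=g(y)+g(y')$ there) and $y\mapsto\operatorname{tr}_1^m(yv)$ are linear functionals, so the remaining sum equals $2^{m-r}$ when these two functionals agree on $\operatorname{rad}(\beta)$ and $0$ otherwise. Hence $W_g(v)^2\in\{0,2^{2m-r}\}$, i.e. $\chi_Q(\lambda)\in\{0,\pm 2^{m-r/2}\}$ for every $\lambda$.

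Finally, write $n_+,n_-,n_0$ for the multiplicities of $2^{m-r/2},-2^{m-r/2},0$ in the multiset. Parseval's identity $\sum_v W_g(v)^2=2^{2m}$ gives $(n_++n_-)\,2^{2m-r}=2^{2m}$, whence $n_++n_-=2^r$ and $n_0=2^m-2^r$; and the first moment $\sum_v W_g(v)=2^m(-1)^{g(0)}=2^m$ gives $(n_+-n_-)\,2^{m-r/2}=2^m$, whence $n_+-n_-=2^{r/2}$. Solving, $n_\pm=2^{r-1}\pm 2^{r/2-1}$, which is exactly Table~\ref{t1}. I expect the only genuine obstacle to be the ``three values'' step: establishing $|\chi_Q(\lambda)|\in\{0,2^{m-r/2}\}$ needs the squaring identity together with the observation that $g$ restricts to a linear functional on $\operatorname{rad}(\beta)$; once this is in hand, the fact that the spectrum depends only on $r$ and the exact frequencies are forced purely by the two moment identities. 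An alternative to the squaring argument is to bring $g$ into a normal form (a sum of $r/2$ hyperbolic planes, one of which is replaced by an elliptic plane in the elliptic case) by an invertible $\mathbb{F}_2$-linear substitution fixing the origin and to compute the spectrum of each type directly, but the moment route is shorter and sidesteps the case split.
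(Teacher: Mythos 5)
Your proof is correct. Note that the paper itself gives no proof of Theorem~\ref{thm-x}: it is quoted from the Helleseth--Kumar chapter \cite{HK98}, so there is no internal argument to compare against, and your self-contained derivation is welcome. Every step checks out: alternation forces $Q=2q$ via $2Q(x)=2B(x,x)=0$, so $q$ is a Boolean quadratic form with the same associated bilinear form $B$, hence of the same (necessarily even) rank $r$; the Teichm\"uller bijection $\mu|_F$ and the identity $2\mathrm{Tr}_1^m(\lambda x)=2\mathrm{tr}_1^m(\overline{\lambda x})$ turn $\chi_Q$ into the Walsh--Hadamard transform $W_g$ of a Boolean quadratic form $g$ on $\mathbb{F}_{2^m}$; the squaring/radical argument gives $W_g(v)^2\in\{0,2^{2m-r}\}$, using only the functional equation $g(u+y)=g(u)+g(y)+\beta(u,y)$ and the fact that $g$ restricts to a linear functional on the $(m-r)$-dimensional radical (which needs $g(0)=0$, guaranteed by $Q(0)=0$); and the two moment identities $\sum_v W_g(v)=2^m(-1)^{g(0)}=2^m$ and $\sum_v W_g(v)^2=2^{2m}$ force $n_0=2^m-2^r$ and $n_\pm=2^{r-1}\pm 2^{r/2-1}$. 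A pleasant feature of your moment route, worth stating explicitly, is that it shows the frequencies are independent of the hyperbolic/elliptic type of $g$ (only the locations of the negative values change), which is precisely why the table depends only on $r$; the alternative normal-form computation you mention would require that case split.
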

For a  nonalternating $Q$, the distribution of values in the multiset $\{\chi_{Q}(\lambda):
\lambda \in F\}$ is given in Theorem 5
\cite{S09}.

A $\mathbb{Z}_4$-quadratic form $Q$ is
generalized bent if $\left|\chi_{Q}(\lambda)\right|=2^{m/2}$
for all $\lambda \in F$
\cite{CM16,M15}.  Let $\mathcal{GF}_m$ be
the set of all generalized bent
$\mathbb{Z}_4$-quadratic forms over $F$.
The following lemma \cite{LTH14}
gives a characterization of generalized bent
$\mathbb{Z}_4$-quadratic forms.
\begin{lemma}\label{gbent}
A $\mathbb{Z}_4$-quadratic form $Q(x)$ has full rank $m$ if and only if it is generalized bent.
\end{lemma}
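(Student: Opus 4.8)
The plan is to evaluate the squared modulus $|\chi_Q(\lambda)|^2$ in closed form and show it collapses to a character sum over the radical $\mathrm{rad}(B)$ of the bilinear form $B$ associated to $Q$; both implications then drop out by inspecting that sum. Writing $\mathrm i=\sqrt{-1}$ for brevity, I would first expand
\[
|\chi_Q(\lambda)|^2=\sum_{x,y\in F}\mathrm i^{\,Q(x)-Q(y)+2\Tr_1^m(\lambda x)-2\Tr_1^m(\lambda y)}.
\]
Using $-2\equiv 2\pmod 4$, the fact that $(F,\oplus)$ has characteristic $2$ (so the substitution $x=y\oplus z$ gives $z=x\oplus y$), and that $\mu\colon F\to\mathbb F_{2^m}$ is a ring isomorphism (whence $\lambda(y\oplus z)=\lambda y\oplus\lambda z$ and, together with $2\Tr_1^m=2\mathrm{tr}_1^m$ on $F$, the additivity $2\Tr_1^m(a\oplus b)=2\Tr_1^m(a)+2\Tr_1^m(b)$), the defining relation $Q(y\oplus z)=Q(y)+Q(z)+2B(y,z)$ turns this into
\[
|\chi_Q(\lambda)|^2=\sum_{z\in F}\mathrm i^{\,Q(z)+2\Tr_1^m(\lambda z)}\sum_{y\in F}(-1)^{B(y,z)}.
\]
For each fixed $z$ the map $y\mapsto B(y,z)$ is a $K$-linear functional on $F$, so the inner sum is $2^m$ when $z\in\mathrm{rad}(B)$ and $0$ otherwise, giving
\[
|\chi_Q(\lambda)|^2=2^m\sum_{z\in\mathrm{rad}(B)}\mathrm i^{\,Q(z)+2\Tr_1^m(\lambda z)}.
\]

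For the ``if'' direction: if $Q$ has full rank $m$ then $\dim_K\mathrm{rad}(B)=0$, i.e.\ $\mathrm{rad}(B)=\{0\}$, so the last sum reduces to the single term $\mathrm i^{0}=1$ and $|\chi_Q(\lambda)|^2=2^m$ for all $\lambda\in F$; hence $Q$ is generalized bent. For the converse I would argue the contrapositive: suppose $\rank(Q)=r<m$ and set $s=m-r=\dim_K\mathrm{rad}(B)\ge1$. For $z\in\mathrm{rad}(B)$ one has $B(z,z)=0$, so $2Q(z)=2B(z,z)=0$ and $Q(z)\in\{0,2\}$; moreover $Q(z_1\oplus z_2)=Q(z_1)+Q(z_2)$ on $\mathrm{rad}(B)$, so $Q$ restricted there is additive with values in $\{0,2\}\cong\mathbb F_2$, while $z\mapsto\Tr_1^m(\lambda z)\bmod2$ is $\mathbb F_2$-linear on $F$. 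Thus $z\mapsto Q(z)+2\Tr_1^m(\lambda z)$ induces an $\mathbb F_2$-linear functional on $\mathrm{rad}(B)$, so $\sum_{z\in\mathrm{rad}(B)}\mathrm i^{Q(z)+2\Tr_1^m(\lambda z)}\in\{0,2^s\}$ and therefore $|\chi_Q(\lambda)|^2\in\{0,2^{m+s}\}$ for every $\lambda$. Since $s\ge1$ this is never $2^m$, so $Q$ is not generalized bent.

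The part that requires the most care is the reduction to $|\chi_Q(\lambda)|^2=2^m\sum_{z\in\mathrm{rad}(B)}\mathrm i^{Q(z)+2\Tr_1^m(\lambda z)}$: one must keep track of exactly which operations live in $\mathbb Z_4$, which are $\oplus$ on $F$, and which are transported through $\mu$ to $\mathbb F_{2^m}$, and in particular justify that $2\Tr_1^m$ is $\oplus$-additive and that multiplication by $\lambda$ distributes over $\oplus$. An alternative that avoids this computation is to read both directions off the known value distributions of $\{\chi_Q(\lambda):\lambda\in F\}$ --- Theorem~\ref{thm-x} when $Q$ is alternating and Theorem~5 of \cite{S09} when it is nonalternating --- together with the fact recorded above that this multiset depends only on $\rank(Q)$: in each table the value $0$ occurs with positive multiplicity precisely when $r<m$, while for $r=m$ every entry has modulus $2^{m/2}$, which gives the equivalence at once.
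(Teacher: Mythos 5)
Your argument is correct. Note, however, that the paper does not prove Lemma~\ref{gbent} at all: it is quoted as a known result with a citation to \cite{LTH14}, so there is no in-paper proof to compare against. What you have supplied is a self-contained derivation, and it is the standard one: squaring the Walsh transform, substituting $x=y\oplus z$, and collapsing the inner sum to the indicator of $\mathrm{rad}(B)$ is exactly the computation underlying the rank/value-distribution results the paper invokes (Theorem~\ref{thm-x} and Theorem~5 of \cite{S09}). All the delicate points are handled correctly: multiplication by $\lambda\in F$ does distribute over $\oplus$ because $\lambda a\oplus\lambda b=\lambda a+\lambda b+2\lambda\sqrt{ab}=\lambda(a\oplus b)$; $2\mathrm{Tr}_1^m$ is $\oplus$-additive since the cross term $4\mathrm{Tr}_1^m(\sqrt{ab})$ vanishes mod $4$; and on the radical $Q$ takes values in $\{0,2\}$ and is additive, so the residual sum is $0$ or $2^s$, ruling out $|\chi_Q(\lambda)|^2=2^m$ whenever $s\ge 1$. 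Your alternative route --- reading both implications off the value distributions in Theorem~\ref{thm-x} and \cite{S09} together with the fact that the multiset $\{\chi_Q(\lambda)\}$ depends only on $\rank(Q)$ --- is also valid and is closer to how the surrounding literature treats the statement, but it outsources the content to those theorems, whereas your direct computation proves the lemma from the axioms of a $\mathbb{Z}_4$-valued quadratic form alone and works uniformly in the alternating and nonalternating cases. Either version would serve as a legitimate replacement for the bare citation.
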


\subsection{Codebooks from generalized bent
$\mathbb{Z}_4$-quadratic forms}

Let $QF$ be an $n$-set of generalize bent
$\mathbb{Z}_4$-quadratic forms from
$F$ to $\mathbb{Z}_4$ such that
the difference of arbitrary two
distinct quadratic forms in $QF$
 is generalized bent. Let $E_n=
\{\mathbf{e}_i: 1\leq n\}$ be the standard
basis of the $2^m$-dimensional Hilbert space,
where $
\mathbf{e}_i=(0,\ldots,0,1,0,\ldots,0  )
$ is a vector with only the $i$-th entry being nonzero.
  Construct the following
codebook from $QF$:
\begin{align}\label{codebook}
\mathcal{C}_{QF}
=\bigcup_{Q\in QF}S_{Q}\bigcup S_0
\bigcup E_{2^m},
\end{align}
where
$$
S_Q=\left\{\frac{1}{\sqrt{2^m}}
\left((\sqrt{-1})^{Q(\xi)+2\mathrm{Tr}_1^m(
\lambda \xi)} \right)_{\xi \in F}:
\lambda \in F \right\}
$$
and
$$
S_0=\left\{\frac{1}{\sqrt{2^m}}
\left((\sqrt{-1})^{2\mathrm{Tr}_1^m(
\lambda \xi)} \right)_{\xi \in F}:
\lambda \in F \right\}.
$$
The following theorem
\cite{HY17} gives parameters of the codebook constructed from $QF$.
\begin{theorem}\label{optimal}
Let $QF$ be an $n$-set of generalize bent
$\mathbb{Z}_4$-quadratic forms from
$F$ to $\mathbb{Z}_4$ such that
the difference of arbitrary two
distinct quadratic forms in $QF$
 is generalized bent. Let
$\mathcal{C}_{QF}$ be the codebook
constructed in (\ref{codebook}).
Then  $\mathcal{C}_{QF}$ is a
$((n+1)2^m+2^m,2^m)$ codebook with
$I_{max}(\mathcal{C}_{QF})
=\frac{1}{\sqrt{2^m}}$ and alphabet size $6$.
Further, the codebook  $\mathcal{C}_{QF}$
is an optimal codebook achieving
 the Levenshtein bound
if and only if $n=2^m-1$.
\end{theorem}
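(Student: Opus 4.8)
The plan is to verify the enumerative and metric assertions by direct computation and then to extract the optimality criterion from the complex Levenshtein bound by elementary algebra. \textbf{Step 1 (parameters and alphabet).} First I would check that $\mathcal{C}_{QF}$ consists of exactly $(n+2)2^m=(n+1)2^m+2^m$ distinct unit vectors in $\mathbb{C}^{2^m}$, so that $N=(n+1)2^m+2^m$ and $K=2^m$. Each $S_Q$ with $Q\in QF$, the set $S_0$, and $E_{2^m}$ contributes $2^m$ vectors (one per $\lambda\in F$ in the first two cases); unit norm is immediate, since the $S_Q$- and $S_0$-vectors have all $2^m$ coordinates of modulus $2^{-m/2}$ and the $E_{2^m}$-vectors are standard basis vectors. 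For distinctness: distinct $E_{2^m}$-vectors obviously differ, and (having a unique nonzero coordinate) differ from every $S_Q$- and $S_0$-vector; two vectors in a single $S_Q$, or both in $S_0$, with parameters $\lambda_1\ne\lambda_2$ are orthogonal, hence distinct (Step 2); and a vector of $S_{Q_1}$ equals one of $S_{Q_2}$ with $Q_1\ne Q_2$ (allowing $Q_2=0$, i.e.\ $S_0$) only if a sum of $2^m$ fourth roots of unity equals $2^m$, which is impossible because that sum has modulus $2^{m/2}<2^m$ (Step 2). Finally every coordinate of every codeword lies in $\{0,1\}\cup\{2^{-m/2}(\sqrt{-1})^{j}:j\in\mathbb{Z}_4\}$, giving alphabet size exactly $6$.

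\textbf{Step 2 (maximum correlation).} This is the heart of the argument: compute $\mathbf{c}_i\mathbf{c}_j^H$ for all distinct pairs. Two distinct $E_{2^m}$-vectors are orthogonal, and an $E_{2^m}$-vector against any $S_Q$- or $S_0$-vector equals a single conjugated coordinate, of modulus $2^{-m/2}$. For $\mathbf{c}_i\in S_{Q_1}$ with parameter $\lambda_1$ and $\mathbf{c}_j\in S_{Q_2}$ with parameter $\lambda_2$ (where $Q_2$ may be $0$, meaning $S_0$), put $Q:=Q_1-Q_2$; checking the two axioms shows $Q$ is again a $\mathbb{Z}_4$-valued quadratic form, with associated bilinear form $B_1-B_2$. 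Using the $\mathbb{Z}_4$-linearity of $\mathrm{Tr}_1^m$,
\[
\mathbf{c}_i\mathbf{c}_j^H=\frac{1}{2^m}\sum_{\xi\in F}(\sqrt{-1})^{Q(\xi)+2\mathrm{Tr}_1^m((\lambda_1-\lambda_2)\xi)}.
\]
Since $2\mathrm{Tr}_1^m(\cdot)$ factors through the reduction $\mu$ and $\mu|_F$ is an isomorphism, there is a unique $\lambda'\in F$ with $\overline{\lambda'}=\overline{\lambda_1-\lambda_2}$ and $2\mathrm{Tr}_1^m((\lambda_1-\lambda_2)\xi)=2\mathrm{Tr}_1^m(\lambda'\xi)$, so the sum equals $\chi_Q(\lambda')$. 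If $Q_1=Q_2$, then $Q=0$ and $\lambda'\ne 0$ (because $\overline{\lambda_1}\ne\overline{\lambda_2}$), so $\chi_0(\lambda')=\sum_{\xi\in F}(\sqrt{-1})^{2\mathrm{Tr}_1^m(\lambda'\xi)}=0$. If $Q_1\ne Q_2$, then $Q$ is generalized bent --- by the hypothesis on $QF$ when $Q_2\ne0$, and because $Q=Q_1\in QF$ when $Q_2=0$ --- so $|\chi_Q(\lambda')|=2^{m/2}$ and $|\mathbf{c}_i\mathbf{c}_j^H|=2^{-m/2}$. Thus every crosscorrelation has modulus $0$ or $2^{-m/2}$, the value $2^{-m/2}$ being attained, so $I_{max}(\mathcal{C}_{QF})=2^{-m/2}=1/\sqrt{2^m}$.

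\textbf{Step 3 (optimality).} With $N=(n+2)2^m$ and $K=2^m$ one has $N>K^2$ exactly when $n\ge 2^m-1$, so the complex Levenshtein bound applies precisely in that range; substituting these values it simplifies to $\sqrt{(2n+3-2^m)/((n+1)(2^m+1))}$. Setting this equal to $I_{max}(\mathcal{C}_{QF})=1/\sqrt{2^m}$ and clearing denominators reduces, after simplification, to $(2^m-1)\bigl(n-(2^m-1)\bigr)=0$, i.e.\ $n=2^m-1$ (as $m\ge1$). Hence if $n=2^m-1$ then $N=2^{2m}+2^m>K^2$ and $I_{max}(\mathcal{C}_{QF})$ equals the Levenshtein bound, so $\mathcal{C}_{QF}$ is optimal; conversely, optimality forces $N>K^2$, hence $n\ge 2^m-1$, and equality in the above relation then forces $n=2^m-1$. (As a by-product, for $n>2^m-1$ the same relation would make the Levenshtein bound strictly larger than $2^{-m/2}=I_{max}(\mathcal{C}_{QF})$, which is impossible; so the construction itself shows $n\le 2^m-1$.)

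The only step that is not routine bookkeeping is the reduction in Step 2 of the $S_{Q_1}$-versus-$S_{Q_2}$ crosscorrelation to the Walsh-transform value $\chi_{Q_1-Q_2}(\lambda')$: one must confirm that $Q_1-Q_2$ is genuinely a $\mathbb{Z}_4$-valued quadratic form and transport the linear term correctly through the Teichmuller/reduction dictionary for $\mathbb{GR}(4,m)$. Once that is in place, the generalized-bentness hypothesis yields $|\chi|=2^{m/2}$ with no further work, and Steps 1 and 3 are counting and elementary algebra.
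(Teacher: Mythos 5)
Your proposal is correct and complete. Note that the paper does not prove this theorem at all --- it is imported verbatim from the cited work of Heng and Yue \cite{HY17} --- so there is no internal proof to compare against; your argument (case-splitting the crosscorrelations by block, reducing the $S_{Q_1}$-versus-$S_{Q_2}$ case to the Walsh value $\chi_{Q_1-Q_2}(\lambda')$ of the generalized bent difference, and simplifying the complex Levenshtein bound to $(2^m-1)\bigl(n-(2^m-1)\bigr)=0$) is exactly the standard one and all the computations check out. One cosmetic point: $B_1-B_2$ need not take values in $K=\{0,1\}$, but only $2(B_1-B_2)$ enters the quadratic-form axiom, so the associated bilinear form of $Q_1-Q_2$ should be read as the reduction of $B_1-B_2$ into $K$; this changes nothing downstream.
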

From Theorem \ref{optimal}, in order to
construct optimal codebooks, we just need to
give the set $QF$ of size $2^m-1$. Some known
results on the set $QF=\{Q_a: a\in F^*\}$
\cite{HY17} have
been given below:
\begin{itemize}
\item $Q_{a}(x)=\operatorname{Tr}_{1}^{m}
(a(1+2 \eta) x)$;
\item $Q_{a}(x)=\operatorname{Tr}_{1}^{m}(a x)+
2 P(\gamma a x)$, where
$P(x)= {\sum_{i=1}^{\frac{s-1}{2}}
\operatorname{Tr}_{1}^{m}
\left(x^{2^{ek i}+1}\right)}$ and $s$ is odd;
\item $Q_{a}(x)=\operatorname{Tr}_{1}^{m}(a(1+2 \eta) x)+
2 P(\gamma a x)$, where
$P(x)= {\sum_{i=1}^{\frac{s-1}{2}}
\operatorname{Tr}_{1}^{m}
\left(x^{2^{ek i}+1}\right)}$ and $s$ is odd.
\end{itemize}

\section{Codebooks from a class of generalize bent
$\mathbb{Z}_4$-quadratic forms}

In this section, we present a class of
generalize bent
$\mathbb{Z}_4$-quadratic forms and construct codebooks from these quadratic forms.

Let $l$ be an positive integer.
Let $e_0\leq e_1<\cdots < e_l$, where
$e_0=1$, $e_i\mid e_{i+1}$ for $i
\in \{0,\ldots,l-1\}$,  and $e_l=m$. Let $f_i=\frac{m}{e_i}$ for
$i\in \{1,\ldots, l-1\}$, where $f_i$ is odd.
For $j\in \{1,\ldots,l-1\}$, define
$$
Q_j(x)=\mathrm{Tr}_1^m(\sum_{i=1}^{
\frac{f_j-1}{2}}x^{2^{ie_j}+1}).
$$
Take
$\gamma_0,\gamma_1,\ldots,
\gamma_{l-1}\in F$, where
$\overline{\gamma_0}=1$,
$\overline{\gamma_j}\in \mathbb{F}_{2^{e_j}}$,
and
$1+\sum_{j=1}^{t}\overline{\gamma_j}^2\neq 0$
for $t\leq l-1$.
For any $a\in F^*$, define
\begin{equation}\label{f_a}
f_a(x)=\mathrm{Tr}_1^m(ax)+
2\sum_{j=1}^{l-1}Q_j(\gamma_j ax).
\end{equation}
 For $j\in \{1,\ldots,l-1\}$, define
the Boolean function  over
$\mathbb{F}_{2^m}$
$$
q_j(x)=\mathrm{tr}_1^m(\sum_{i=1}^{
\frac{f_j-1}{2}}x^{2^{ie_j}+1}).
$$
It is a  quadratic form and its symplectic form is
$$
B_{q_j}(x,y)=q_j(x)+q_j(y)+q_j(x+y),
$$
where $x,y\in \mathbb{F}_{2^m}$.
Further, $B_{q_j}(x,y)=
\mathrm{tr}_1^m\left(y
\left(\mathrm{tr}_{e_j}^m(x)+x \right) \right)
$ \cite{KN03}.
We will use these functions defined in
(\ref{f_a}) to construct
codebooks. Some lemmas are given first.

\begin{lemma}\label{Bfa}
Let $f_a$ be defined in (\ref{f_a}).
Then $2B_{f_a}(x,y)=
2\mathrm{tr}_1^m
 \left(\overline{y}
 \left(\overline{a}^2\overline{x}+
 \sum_{j=1}^{l-1}\overline{
 \gamma_ja}(\mathrm{tr}_{e_j}^m(\overline{\gamma_j ax})+\overline{\gamma_j ax})  \right)
\right)$.
\end{lemma}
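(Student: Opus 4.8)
The plan is to compute the polarization $f_a(x\oplus y)-f_a(x)-f_a(y)$ directly from the definition (\ref{f_a}); by definition of the symmetric bilinear form $B_{f_a}$ associated with $f_a$ one has $f_a(x\oplus y)=f_a(x)+f_a(y)+2B_{f_a}(x,y)$, so this second difference is exactly $2B_{f_a}(x,y)$ and it only remains to simplify it. I would split the work into the linear summand $\mathrm{Tr}_1^m(ax)$ and the doubled summand $2\sum_{j=1}^{l-1}Q_j(\gamma_j ax)$. The workhorse throughout is the identity $2z=2\mu(z)$ in $\mathbb{Z}_4$, which lets every doubled $\mathbb{Z}_4$-valued quantity be replaced by its modulo-$2$ image in $\mathbb{F}_{2^m}$; combined with the facts that $\mu$ is a ring homomorphism on $R$ commuting with the trace (so $\mu(\mathrm{Tr}_1^m(w))=\mathrm{tr}_1^m(\overline{w})$) and that $\overline{x\oplus y}=\overline{x}+\overline{y}$, everything reduces to arithmetic in $\mathbb{F}_{2^m}$.

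For the linear part, distributivity of $\cdot$ over $\oplus$ on $F$ gives $a(x\oplus y)=(ax)\oplus(ay)=ax+ay+2\sqrt{(ax)(ay)}=ax+ay+2a\sqrt{xy}$, hence $\mathrm{Tr}_1^m(a(x\oplus y))-\mathrm{Tr}_1^m(ax)-\mathrm{Tr}_1^m(ay)=2\mathrm{Tr}_1^m(a\sqrt{xy})=2\mathrm{tr}_1^m\bigl(\overline{a}(\overline{x}\,\overline{y})^{2^{m-1}}\bigr)$. Squaring the argument and invoking the Frobenius-invariance of $\mathrm{tr}_1^m$ rewrites this as $2\mathrm{tr}_1^m(\overline{a}^2\overline{x}\,\overline{y})$, which is the first term inside the claimed expression.

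For the doubled part, for each $j$ and each $z\in F$ one has $2Q_j(\gamma_j a z)=2\mu\bigl(\mathrm{Tr}_1^m\bigl(\sum_i(\gamma_j a z)^{2^{ie_j}+1}\bigr)\bigr)=2q_j(\overline{\gamma_j a}\,\overline{z})$. Evaluating at $z\in\{x\oplus y,\,x,\,y\}$ and using $\overline{x\oplus y}=\overline{x}+\overline{y}$, the contribution of this summand to the polarization is $2\sum_{j=1}^{l-1}B_{q_j}(\overline{\gamma_j a}\,\overline{x},\,\overline{\gamma_j a}\,\overline{y})$, where $B_{q_j}$ is the symplectic form of $q_j$. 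I would then substitute the quoted formula $B_{q_j}(u,v)=\mathrm{tr}_1^m\bigl(v(\mathrm{tr}_{e_j}^m(u)+u)\bigr)$ from \cite{KN03} with $u=\overline{\gamma_j ax}$ and $v=\overline{\gamma_j a}\,\overline{y}$, add the linear contribution from the previous paragraph, and collect everything under a single $\mathrm{tr}_1^m$ by linearity; the result is precisely the asserted formula for $2B_{f_a}(x,y)$.

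The argument is essentially bookkeeping once the two reductions to $\mathbb{F}_{2^m}$ are in place; the one step that is not purely mechanical is the linear part, where the carry term $2\sqrt{xy}$ coming from $x\oplus y=(x+y)^{2^m}$ must be carried along correctly and its trace rewritten via Frobenius-invariance so that the coefficient emerges as $\overline{a}^2$ rather than $\overline{a}$. That is the step I would double-check, since a slip there would change the statement.
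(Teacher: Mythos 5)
Your proposal is correct and follows essentially the same route as the paper: polarize $f_a$, convert the linear part via the carry term $2\mathrm{Tr}_1^m(a\sqrt{xy})$ and Frobenius-invariance of $\mathrm{tr}_1^m$ into $2\mathrm{tr}_1^m(\overline{a}^2\overline{x}\,\overline{y})$, reduce the doubled summands modulo $2$ to the Boolean forms $q_j$ and their symplectic forms $B_{q_j}$, and substitute the formula $B_{q_j}(u,v)=\mathrm{tr}_1^m\bigl(v(\mathrm{tr}_{e_j}^m(u)+u)\bigr)$ from \cite{KN03} before collecting under a single trace. The step you flag as needing care (the coefficient $\overline{a}^2$) is handled exactly as in the paper's computation.
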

\begin{proof}
We have
\begin{align*}
2B_{f_a}(x,y)=&f_a(x\oplus y)
-f_a(x)-f_a(y)\\
=& \mathrm{Tr}_1^m(ax\oplus ay)
-\mathrm{Tr}_1^m(ax)-\mathrm{Tr}_1^m(ay)
+2\sum_{j=1}^{l-1}
\left(Q_j(\gamma_j ax\oplus \gamma_j ay)
-Q_j(\gamma_j ax)-Q_j(\gamma_j ay)\right)\\
=& 2\mathrm{Tr}_1^m(a\sqrt{xy})
+\sum_{j=1}^{l-1}
2\left(Q_j(\gamma_j ax\oplus \gamma_j ay)
-Q_j(\gamma_j ax)-Q_j(\gamma_j ay)\right)\\
=& 2\mathrm{tr}_1^m(\overline{a}^2\overline{xy})
+\sum_{j=1}^{l-1}
2\left(q_j(\overline{\gamma_j ax}+
\overline{\gamma_j ay})
+q_j(\overline{\gamma_j ax})+q_j(
\overline{\gamma_j ay})\right)\\
=& 2\mathrm{tr}_1^m(\overline{a}^2\overline{xy})
+\sum_{j=1}^{l-1}
2B_{q_j}(\overline{\gamma_j ax},
\overline{\gamma_j ay})\\
=& 2\mathrm{tr}_1^m(\overline{a}^2\overline{xy})
+\sum_{j=1}^{l-1}
2\mathrm{tr}_1^m\left(\overline{\gamma_j ay}
(\mathrm{tr}_{e_j}^m(\overline{\gamma_j ax})+\overline{\gamma_j ax})
 \right)\\
 =& 2\mathrm{tr}_1^m
 \left(\overline{y}
 \left(\overline{a}^2\overline{x}+
 \sum_{j=1}^{l-1}\overline{
 \gamma_ja}(\mathrm{tr}_{e_j}^m(\overline{\gamma_j ax})+\overline{\gamma_j ax})  \right)  \right).
\end{align*}
Hence, this lemma follows.
\end{proof}

%\begin{align*}
%\overline{a^2x}+
% \sum_{j=1}^{l-1}\overline{
% \gamma_ja}(\mathrm{tr}_{e_j}^m(\overline{\gamma_j ax})+\overline{\gamma_j ax})=0
%\end{align*}

\begin{lemma}\label{Bfab}
Let $a,b\in F$ and $f_a,f_b$ be defined in
(\ref{f_a}), where
$a\neq b$. Then
$$
2B_{f_a-f_b}(x,y)=2\mathrm{tr}_1^m
 \left(\overline{y}
 \left({(\overline{a^2}+\overline{b^2})
 \overline{x}}+
 \sum_{j=1}^{l-1}\left(\overline{
 \gamma_ja}(\mathrm{tr}_{e_j}^m(\overline{\gamma_j ax})+\overline{\gamma_j ax})+
 \overline{
 \gamma_jb}(\mathrm{tr}_{e_j}^m(\overline{\gamma_j bx})+\overline{\gamma_j bx})\right)  \right)  \right).$$
\end{lemma}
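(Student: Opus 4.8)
The plan is to reduce Lemma \ref{Bfab} directly to Lemma \ref{Bfa} by exploiting the additivity of the bilinear form in the quadratic form. First I would recall that for any $\mathbb{Z}_4$-valued quadratic form $Q$ with associated bilinear form $B_Q$, and for the difference of two quadratic forms, the associated bilinear form behaves additively, that is $B_{f_a - f_b}(x,y) = B_{f_a}(x,y) - B_{f_b}(x,y)$ as a consequence of the defining identity $2B_Q(x,y) = Q(x\oplus y) - Q(x) - Q(y)$ being linear in $Q$. Concretely,
\begin{align*}
2B_{f_a - f_b}(x,y) &= (f_a - f_b)(x\oplus y) - (f_a - f_b)(x) - (f_a - f_b)(y)\\
&= \bigl(f_a(x\oplus y) - f_a(x) - f_a(y)\bigr) - \bigl(f_b(x\oplus y) - f_b(x) - f_b(y)\bigr)\\
&= 2B_{f_a}(x,y) - 2B_{f_b}(x,y).
\end{align*}
This is valid as an identity in $\mathbb{Z}_4$; I should be slightly careful that the subtraction happens before the factor of $2$ is pulled out, but since everything is already a well-defined $\mathbb{Z}_4$-valued expression this causes no trouble.

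Next I would substitute the formula from Lemma \ref{Bfa} for each of $2B_{f_a}(x,y)$ and $2B_{f_b}(x,y)$. This gives
$$2B_{f_a-f_b}(x,y) = 2\mathrm{tr}_1^m\!\left(\overline{y}\Bigl(\overline{a}^2\overline{x} + \sum_{j=1}^{l-1}\overline{\gamma_j a}\bigl(\mathrm{tr}_{e_j}^m(\overline{\gamma_j a x}) + \overline{\gamma_j a x}\bigr)\Bigr)\right) - 2\mathrm{tr}_1^m\!\left(\overline{y}\Bigl(\overline{b}^2\overline{x} + \sum_{j=1}^{l-1}\overline{\gamma_j b}\bigl(\mathrm{tr}_{e_j}^m(\overline{\gamma_j b x}) + \overline{\gamma_j b x}\bigr)\Bigr)\right).$$
Then I would combine the two $\mathbb{F}_2$-trace terms using linearity of $\mathrm{tr}_1^m$ over $\mathbb{F}_{2^m}$ and the fact that $2u - 2v = 2(u+v)$ for $u,v$ lifted from $\mathbb{F}_2$ (since in $\mathbb{Z}_4$ we have $2(-1) = 2$), noting also that $\overline{a}^2\overline{x} + \overline{b}^2\overline{x} = \overline{a^2 + b^2}\,\overline{x}$ because $\mu$ is a ring homomorphism and Frobenius is additive on $\mathbb{F}_{2^m}$. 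Collecting the cross terms under one summation over $j$ then yields exactly the asserted expression for $2B_{f_a - f_b}(x,y)$.

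I do not anticipate a genuine obstacle here — the lemma is essentially a bookkeeping corollary of Lemma \ref{Bfa}. The only points requiring a moment of care are: (i) making sure that the arithmetic $2u - 2v = 2(u+v)$ in $\mathbb{Z}_4$ is applied only to the $2\mathrm{tr}_1^m(\cdot)$ parts, which is legitimate because each such part is $2$ times an element that comes from reducing mod $2$; and (ii) confirming that the linear term combines as $(\overline{a^2} + \overline{b^2})\overline{x}$ rather than $\overline{(a-b)^2}\,\overline{x}$ — which it does, precisely because we are working with the bilinear form of the difference, not the difference of arguments, and $2B$ is additive in the quadratic form but the quadratic form $f_a$ is not linear in $a$. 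With these two observations in place, the proof is a two-line substitution, and I would present it as such: state the additivity of $2B$ in the quadratic form, invoke Lemma \ref{Bfa} twice, and simplify.
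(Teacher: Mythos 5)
Your proposal is correct and follows exactly the paper's own argument: the paper's proof is the one-liner ``$2B_{f_a-f_b}(x,y)=2B_{f_a}(x,y)-2B_{f_b}(x,y)$, then apply Lemma \ref{Bfa}.'' Your additional care about the $\mathbb{Z}_4$ arithmetic ($2u-2v=2(u+v)$) and the combination of the linear terms is sound bookkeeping that the paper leaves implicit.
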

\begin{proof}
Note that $2B_{f_a-f_b}(x,y)=2B_{f_a}(x,y)-2B_{f_b}(x,y)$. From Lemma \ref{Bfa}, this lemma follows.
\end{proof}

\begin{lemma}\label{radB}
Let $a,b\in F$ and $f_a,f_b$ be defined in
(\ref{f_a}), where
$a\neq b$. Then $rad(B_{f_a-f_b})=\{0\}$.
\end{lemma}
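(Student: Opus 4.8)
The plan is to translate ``$x\in rad(B_{f_a-f_b})$'' into a linearized equation over $\mathbb F_{2^m}$ and show that equation forces $\overline x=0$. By Lemma~\ref{Bfab} and the nondegeneracy of the absolute trace form, $x\in rad(B_{f_a-f_b})$ is equivalent to the vanishing of the quantity $\Xi$ sitting inside the trace in that lemma. Set $\alpha=\overline a,\ \beta=\overline b,\ u=\overline x$ and $c_j=\overline{\gamma_j}\in\mathbb F_{2^{e_j}}$; pulling each $c_j$ outside $\mathrm{tr}_{e_j}^m$ (allowed because $c_j\in\mathbb F_{2^{e_j}}$) and regrouping, $\Xi=0$ rearranges to
\[
\mathcal L(u):=\alpha\,\psi(\alpha u)+\beta\,\psi(\beta u)=0,\qquad \psi(w):=C^{2}w+\Phi(w),\qquad \Phi(w):=\sum_{j=1}^{l-1}c_j^{2}\,\mathrm{tr}_{e_j}^m(w),
\]
where $C^{2}:=1+\sum_{j=1}^{l-1}c_j^{2}$. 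I shall use: (i) $\mu|_F$ is injective, so $a\neq b$ gives $\alpha\neq\beta$; (ii) a summand with $c_j=0$ vanishes identically, so we may assume all $c_j\neq0$; (iii) $1+\sum_{j=1}^{t}c_j^{2}=\bigl(1+\sum_{j=1}^{t}c_j\bigr)^{2}=:C_t^{2}$ with $C_t\neq0$ for $0\le t\le l-1$ (hypothesis), and $C:=C_{l-1}$; (iv) $\Phi$ is $\mathbb F_{2^{e_{l-1}}}$-valued.

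\emph{$\psi$ is a bijection.} (This is the case $b=0$, and says each $f_a$ is generalized bent, by Lemma~\ref{gbent}.) Let $\psi(w)=0$; then $C^{2}w=\Phi(w)\in\mathbb F_{2^{e_{l-1}}}$, so $w\in\mathbb F_{2^{e_{l-1}}}$. Descend the chain $\mathbb F_{2^{e_{l-1}}}\supset\dots\supset\mathbb F_{2^{e_1}}$: if $w\in\mathbb F_{2^{e_t}}$, then (each $f_j$ being odd) $\mathrm{tr}_{e_j}^m(w)=f_jw=w$ for $j\ge t$ and $\mathrm{tr}_{e_j}^m(w)=\mathrm{tr}_{e_j}^{e_t}(w)$ for $j<t$, hence $\psi(w)=C_{t-1}^{2}w+\sum_{j<t}c_j^{2}\,\mathrm{tr}_{e_j}^{e_t}(w)$ with the sum in $\mathbb F_{2^{e_{t-1}}}$, so $\psi(w)=0$ forces $w\in\mathbb F_{2^{e_{t-1}}}$ (using $C_{t-1}\neq0$); at $t=1$ this reads $w=\psi(w)=0$. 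In particular, if $\alpha\beta=0$, say $\beta=0$, then $\alpha\psi(\alpha u)=0$ gives $\alpha u=0$ and $u=0$.

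\emph{The case $\alpha,\beta\neq0$.} Now $\mathcal L(u)=0$ says $C^{2}(\alpha+\beta)^{2}u=\alpha\Phi(\alpha u)+\beta\Phi(\beta u)$. With $\nu:=\bigl(C(\alpha+\beta)\bigr)^{-1}$ and $w:=\nu\alpha$ one has $\nu\beta=w+\kappa$, $\kappa:=1/C\in\mathbb F_{2^{e_{l-1}}}$; multiplying by $\alpha$, resp.\ $\beta$, and dividing by $\bigl(C(\alpha+\beta)\bigr)^{2}$ gives $\alpha u=w\bigl(w\Phi(\alpha u)+(w+\kappa)\Phi(\beta u)\bigr)$ and $\beta u=(w+\kappa)\bigl(w\Phi(\alpha u)+(w+\kappa)\Phi(\beta u)\bigr)$. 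Putting $A_j=\mathrm{tr}_{e_j}^m(\alpha u)$, $B_j=\mathrm{tr}_{e_j}^m(\beta u)$ and applying $\mathrm{tr}_{e_k}^m$ to these identities, again using $f_j$ odd and $\mathbb F_{2^{e_1}}\subset\dots\subset\mathbb F_{2^{e_{l-1}}}$, produces a homogeneous linear system for $\{A_j,B_j\}_j$ over the fields $\mathbb F_{2^{e_k}}$. The decisive feature is that the two coefficients $w$ and $w+\kappa$ differ by $\kappa\in\mathbb F_{2^{e_{l-1}}}$, which makes the cross terms telescope just as above and, with $C_t\neq0$, forces $A_j=B_j=0$ for all $j$; then $\Phi(\alpha u)=\Phi(\beta u)=0$ and $u=\nu^{2}\bigl(\alpha\Phi(\alpha u)+\beta\Phi(\beta u)\bigr)=0$. (For $l=2$ the system is just $A(1+\tau^{2})=B\tau(\tau+k)$, $B(1+\tau^{2}+k^{2})=A\tau(\tau+k)$ with $k=c_1/(1+c_1)$ and $\tau$ the relevant relative trace; its only solution is $A=B=0$ because $1+k^{2}=1/(1+c_1)^{2}\neq0$.)

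The reduction and the bijectivity of $\psi$ are routine; the technical heart is solving the coupled trace-system of the last step for general $l$. A convenient alternative there is induction on $l$: if $\alpha/\beta\in\mathbb F_{2^{e_{l-1}}}$ one checks $\alpha u,\beta u\in\mathbb F_{2^{e_{l-1}}}$ and that $\mathcal L(u)=0$ then reduces to the $(l-1)$-layer instance over $\mathbb F_{2^{e_{l-1}}}$ (whose layer quotients $e_{l-1}/e_j=f_j/f_{l-1}$ stay odd, so the hypotheses persist), which the inductive hypothesis closes; only $\alpha/\beta\notin\mathbb F_{2^{e_{l-1}}}$ then needs the explicit computation above, and that is where the main effort goes.
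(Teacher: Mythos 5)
Your reduction of the radical condition to $\mathcal{L}(u)=\alpha\psi(\alpha u)+\beta\psi(\beta u)=0$ is correct, and the descent through the subfield chain showing $\psi$ is a bijection (the case $\alpha\beta=0$) is complete. But the case $\alpha,\beta\neq 0$ --- which is the entire content of the lemma --- is not proved. You reduce it to a coupled system of trace equations in the unknowns $A_j=\mathrm{tr}_{e_j}^m(\alpha u)$, $B_j=\mathrm{tr}_{e_j}^m(\beta u)$ and assert that the cross terms ``telescope'' and force $A_j=B_j=0$, writing out only $l=2$ (and even that schematically); you yourself concede that ``that is where the main effort goes.'' Moreover the claimed system is not visibly of the advertised form: the coefficients $w=\nu\alpha$ and $w+\kappa=\nu\beta$ are general elements of $\mathbb{F}_{2^m}$, not of $\mathbb{F}_{2^{e_{l-1}}}$, so $\mathrm{tr}_{e_k}^m\bigl(w^2\Phi(\alpha u)\bigr)$ does not factor as $\Phi(\alpha u)\,\mathrm{tr}_{e_k}^m(w^2)$ for $k<l-1$; after one trace you are left with products of two elements of $\mathbb{F}_{2^{e_{l-1}}}$ inside further relative traces, and these do not close up into a linear system over $\mathbb{F}_{2^{e_k}}$ in $A_j,B_j$ alone. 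The proposed induction on $l$ explicitly defers the hard subcase $\alpha/\beta\notin\mathbb{F}_{2^{e_{l-1}}}$ back to this same unproved computation. So there is a genuine gap at the technical heart.

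The paper sidesteps the coupled system entirely by one preliminary move you did not make: multiply the vanishing expression by $\overline{x}$ before taking relative traces. The resulting quantity $A$ regroups so that the $j$-th cross term has the shape $y\bigl(\mathrm{tr}_{e_j}^m(y)+y\bigr)$ with $y=\overline{\gamma_j a x}$ (resp.\ $\overline{\gamma_j b x}$), and
\begin{equation*}
\mathrm{tr}_{e_j}^m\bigl(y\,\mathrm{tr}_{e_j}^m(y)+y^2\bigr)
=\bigl(\mathrm{tr}_{e_j}^m(y)\bigr)^2+\bigl(\mathrm{tr}_{e_j}^m(y)\bigr)^2=0,
\end{equation*}
so each summand is annihilated by its own relative trace. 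Cascading $\mathrm{tr}_{e_1}^m\circ\cdots\circ\mathrm{tr}_{e_{l-1}}^m$ then yields $\mathrm{tr}_{e_1}^m\bigl((\overline{a}+\overline{b})\overline{x}\bigr)=0$; taking the largest $t$ with $\mathrm{tr}_{e_j}^m\bigl((\overline{a}+\overline{b})\overline{x}\bigr)=0$ for all $j\le t$ and applying $\mathrm{tr}_{e_{t+1}}^m$ to $A$, the hypothesis $1+\sum_{j=1}^{t}\overline{\gamma_j}^2\neq 0$ forces $\mathrm{tr}_{e_{t+1}}^m\bigl((\overline{a}+\overline{b})\overline{x}\bigr)\in\mathbb{F}_{2^{e_t}}$, whence, by oddness of the $f_j$, $\mathrm{tr}_{e_t}^m\bigl((\overline{a}+\overline{b})\overline{x}\bigr)=\mathrm{tr}_{e_{t+1}}^m\bigl((\overline{a}+\overline{b})\overline{x}\bigr)\neq 0$, a contradiction. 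Only the single scalar chain $\mathrm{tr}_{e_j}^m\bigl((\overline{a}+\overline{b})\overline{x}\bigr)$ ever needs to be tracked; no two-parameter system appears. If you want to salvage your framework, this multiplication by $\overline{x}$ is the missing idea.
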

\begin{proof}
Let $x\in rad(B_{f_a-f_b})$, we just need to prove that $x=0$.

Suppose that $x\in rad(B_{f_a-f_b})$ and
$x\neq 0$. By Lemma \ref{Bfab}, $\overline{x}$ is a nonzero solution of
\begin{align*}
(\overline{a^2}+\overline{b^2})
\overline{x}+
 \sum_{j=1}^{l-1}\left(\overline{
 \gamma_ja}(\mathrm{tr}_{e_j}^m(\overline{\gamma_j ax})+\overline{\gamma_j ax})+
 \overline{
 \gamma_jb}(\mathrm{tr}_{e_j}^m(\overline{\gamma_j bx})+\overline{\gamma_j bx})\right)
=0.
\end{align*}
We have
$$
(1+\sum_{j=1}^{l-1}\overline{\gamma_j}^2)
(\overline{a^2}+\overline{b^2})
\overline{x}+
\sum_{j=1}^{l-1} \overline{\gamma_j^2}
\left(\overline{a}\mathrm{tr}_{e_j}^m(
\overline{ax})+\overline{b}\mathrm{tr}_{e_j}^m(
\overline{bx}) \right)=0
$$
and
$$
(1+\sum_{j=1}^{l-1}\overline{\gamma_j}^2)
(\overline{a^2}+\overline{b^2})
\overline{x}^2+
\sum_{j=1}^{l-1} \overline{\gamma_j^2} \overline{x}
\left(\overline{a}\mathrm{tr}_{e_j}^m(
\overline{ax})+\overline{b}\mathrm{tr}_{e_j}^m(
\overline{bx}) \right)=0.
$$

Note that $\mathrm{tr}_{e_j}^m(y^2)=
(\mathrm{tr}_{e_j}^m(y))^2$ for any
$y\in \mathbb{F}_{2^m}$.
Let $$A=(1+\sum_{j=1}^{l-1}\overline{\gamma_j}^2)
(\overline{a^2}+\overline{b^2})
\overline{x}^2+
\sum_{j=1}^{l-1} \overline{\gamma_j^2} \overline{x}
\left(\overline{a}\mathrm{tr}_{e_j}^m(
\overline{ax})+\overline{b}\mathrm{tr}_{e_j}^m(
\overline{bx}) \right).$$
Then we have
\begin{align*}
\mathrm{tr}_{e_{l-1}}^m(A)
=&
(1+\sum_{j=1}^{l-1}\overline{\gamma_j}^2)
\mathrm{tr}_{e_{l-1}}^m
\left
((\overline{a^2}+\overline{b^2})
\overline{x}^2\right) +
\sum_{j=1}^{l-2} \overline{\gamma_j^2}
\left(
\mathrm{tr}_{e_{l-1}}^m(\overline{a}\overline{x})
\mathrm{tr}_{e_j}^m(
\overline{ax})+\mathrm{tr}_{e_{l-1}}^m
(\overline{b}\overline{x})\mathrm{tr}_{e_j}^m(
\overline{bx}) \right)\\&
+ \overline{\gamma_{l-1}^2}\mathrm{tr}_{e_{l-1}}^m
\left
((\overline{a^2}+\overline{b^2})
\overline{x}^2\right)
\\
=& (1+\sum_{j=1}^{l-2}\overline{\gamma_j}^2)
\mathrm{tr}_{e_{l-1}}^m
\left
((\overline{a^2}+\overline{b^2})
\overline{x}^2\right) +
\sum_{j=1}^{l-2} \overline{\gamma_j^2}
\left(
\mathrm{tr}_{e_{l-1}}^m(\overline{a}\overline{x})
\mathrm{tr}_{e_j}^m(
\overline{ax})+\mathrm{tr}_{e_{l-1}}^m
(\overline{b}\overline{x})\mathrm{tr}_{e_j}^m(
\overline{bx}) \right).
\end{align*}
Since $f_{j+1}$  is odd, then $\mathrm{tr}_{e_j}^m
(\mathrm{tr}_{e_{j+1}}^m(y))
=\mathrm{tr}_{e_{j}}^m(y)$
for any
$y\in \mathbb{F}_{2^m}$.  We have
$$
\mathrm{tr}_{e_{1}}^m\left(
 \mathrm{tr}_{e_{2}}^m\left(
\cdots \mathrm{tr}_{e_{l-1}}^m\left(A
 \right) \cdots
 \right)\right)
= \mathrm{tr}_{e_1}^{m}((\overline{a}^2+
\overline{b}^2)\overline{x}^2)=
\left(
\mathrm{tr}_{e_1}^{m}((\overline{a}+
\overline{b})\overline{x})\right)^2=0.
$$
Hence, $\mathrm{tr}_{e_1}^{m}((\overline{a}+
\overline{b})\overline{x})=0$.
There exists   a $t$  such that
$
\left\{
  \begin{array}{l}
     \mathrm{tr}_{e_{t+1}}^{m}((\overline{a}+
\overline{b})\overline{x})\neq 0,\\
     \mathrm{tr}_{e_j}^{m}((\overline{a}+
\overline{b})\overline{x})=0,~
\text{for any}~1\leq j \leq t \,.
  \end{array}
\right.
$
For any $j\geq t+1$, we have
\begin{align*}
\mathrm{tr}_{e_{t+1}}^{m}\left(\overline{
 \gamma_jax}(\mathrm{tr}_{e_j}^m(\overline{\gamma_j ax})+\overline{\gamma_j ax})\right)
=& \mathrm{tr}_{e_{t+1}}^{e_j}\left(
\mathrm{tr}_{e_{j}}^{m}\left(
\overline{
 \gamma_jax}(\mathrm{tr}_{e_j}^m(\overline{\gamma_j ax})+\overline{\gamma_j ax})
\right)
\right)\\
=&
\mathrm{tr}_{e_{t+1}}^{e_j}
\left(
\left(
\mathrm{tr}_{e_{j}}^{m}(
\overline{\gamma_jax})\right)^2+
\mathrm{tr}_{e_{j}}^{m}\left(
(\overline{\gamma_jax})^2\right)
\right)
\\
=&0.
\end{align*}
Since $\mathrm{tr}_{e_j}^{m}((\overline{a}+
\overline{b})\overline{x})=0$
 for any $1\leq j \leq t$,
let $u_j=\mathrm{tr}_{e_j}^{m}(
\overline{a}\overline{x})
=\mathrm{tr}_{e_j}^{m}(
\overline{b}\overline{x})$, where
$1\leq j\leq t$.
Then
\begin{align*}
\mathrm{tr}_{e_{t+1}}^{m}(A)=&
\mathrm{tr}_{e_{t+1}}^{m}\left(
(\overline{a^2}+\overline{b^2})
\overline{x}^2\right)+
 \sum_{j=1}^{t}\mathrm{tr}_{e_{t+1}}^{m}\left(\overline{
 \gamma_jax}(\mathrm{tr}_{e_j}^m(\overline{\gamma_j ax})+\overline{\gamma_j ax})+
 \overline{
 \gamma_jbx}(\mathrm{tr}_{e_j}^m(\overline{\gamma_j bx})+\overline{\gamma_j bx})\right)\\
=&\left( \mathrm{tr}_{e_{t+1}}^{m}((\overline{a}+
\overline{b})\overline{x})\right)^2+
\mathrm{tr}_{e_{t+1}}^{m}((\overline{a}+
\overline{b})\overline{x})
\sum_{j=1}^{t}\overline{\gamma_j}^2
\left(
\mathrm{tr}_{e_{t+1}}^{m}((\overline{a}+
\overline{b})\overline{x})
+ u_j
\right)
\\
=&
\mathrm{tr}_{e_{t+1}}^{m}((\overline{a}+
\overline{b})\overline{x})
\left(
(1+\sum_{j=1}^{t}\overline{\gamma_j}^2)
\mathrm{tr}_{e_{t+1}}^{m}((\overline{a}+
\overline{b})\overline{x})
+\sum_{j=1}^{t}\overline{\gamma_j}^2u_j
\right)\\
=&0.
\end{align*}
Note that $\mathrm{tr}_{e_{t+1}}^{m}((\overline{a}+
\overline{b})\overline{x})\neq 0$ and
$1+\sum_{j=1}^{t}\overline{\gamma_j}^2\neq 0$.
We have $$(1+\sum_{j=1}^{t}\overline{\gamma_j}^2)
\mathrm{tr}_{e_{t+1}}^{m}((\overline{a}+
\overline{b})\overline{x})
+\sum_{j=1}^{t}\overline{\gamma_j}^2u_j=0$$
and
$$
\mathrm{tr}_{e_{t+1}}^{m}((\overline{a}+
\overline{b})\overline{x})=
\frac{\sum_{j=1}^{t}\overline{\gamma_j}^2u_j}{
1+\sum_{j=1}^{t}\overline{\gamma_j}^2}
\in \mathbb{F}_{2^{e_t}}.
$$
Note that $f_j$ is odd. We have
$$
0=\mathrm{tr}_{e_{t}}^{m}((\overline{a}+
\overline{b})\overline{x})
=\mathrm{tr}_{e_{t}}^{e_{t+1}}\left(
\mathrm{tr}_{e_{t+1}}^{m}
((\overline{a}+
\overline{b})\overline{x})
\right)
=\mathrm{tr}_{e_{t+1}}^{m}
((\overline{a}+
\overline{b})\overline{x})
\mathrm{tr}_{e_{t}}^{e_{t+1}}(1)
=\mathrm{tr}_{e_{t+1}}^{m}
((\overline{a}+
\overline{b})\overline{x}),
$$
which makes a contradiction with the definition of $t$.   Hence, $x=0$ if $x\in rad(B_{f_a-f_b})$. This lemma follows.
\end{proof}

\begin{theorem}\label{fgbent}
The set $S=\{f_a: a\in F^*\}$ is an
$n$-set of generalize bent
$\mathbb{Z}_4$-quadratic forms from
$F$ to $\mathbb{Z}_4$ such that
the difference of arbitrary two
distinct quadratic forms in $S$
 is generalized bent, where
$f_a$ is defined in (\ref{f_a}) and
$n=2^m-1$.
\end{theorem}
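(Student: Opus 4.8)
The plan is to obtain Theorem~\ref{fgbent} as a bookkeeping consequence of the three lemmas already proved together with the rank criterion of Lemma~\ref{gbent}. There are three things to establish: (i) each $f_a$ with $a\in F^*$ is a generalized bent $\mathbb{Z}_4$-valued quadratic form; (ii) for $a\neq b$ in $F^*$ the difference $f_a-f_b$ is a generalized bent $\mathbb{Z}_4$-valued quadratic form; (iii) the map $a\mapsto f_a$ is injective on $F^*$, so $|S|=2^m-1=n$. By Lemma~\ref{gbent}, both (i) and (ii) reduce to showing that the relevant associated symmetric bilinear form has trivial radical, i.e. full rank $m$; and (iii) will follow from the same radical computation.

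First I would check that $f_a$, and more generally $f_a-f_b$, really is a $\mathbb{Z}_4$-valued quadratic form. This is routine: $x\mapsto \mathrm{Tr}_1^m(ax)$ is a $\mathbb{Z}_4$-valued quadratic form (as already used for the known families in Section~2.3); each multiplication map $x\mapsto cx$ with $c\in F$ is an endomorphism of $(F,\oplus)$, because $\sqrt{c^2xy}=\sqrt{c^2}\,\sqrt{xy}=c\sqrt{xy}$ in $F$, so $2Q_j(\gamma_j a x)$ is again a $\mathbb{Z}_4$-valued quadratic form; and the class of $\mathbb{Z}_4$-valued quadratic forms is closed under addition and subtraction, the associated bilinear forms adding accordingly. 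Hence $f_a$ is a $\mathbb{Z}_4$-valued quadratic form with associated bilinear form $B_{f_a}$ as computed in Lemma~\ref{Bfa}, and $f_a-f_b$ is one with associated bilinear form $B_{f_a-f_b}$ as in Lemma~\ref{Bfab}.

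Next, for the differences: for $a\neq b$ in $F^*$, Lemma~\ref{radB} gives $\mathrm{rad}(B_{f_a-f_b})=\{0\}$, so $f_a-f_b$ has full rank $m$, and Lemma~\ref{gbent} makes it generalized bent. For an individual $f_a$ with $a\in F^*$, I would invoke the same argument with $f_b\equiv 0$ (formally, the $b=0$ specialization of Lemma~\ref{radB}): setting $\overline{b}=0$ only deletes terms from the equation ``$\overline{x}\in \mathrm{rad}$'', and the trace-reduction chain in the proof of Lemma~\ref{radB} goes through verbatim — it forces $\mathrm{tr}_{e_1}^m(\overline{a}\,\overline{x})=0$, then produces an index $t$ with $\mathrm{tr}_{e_{t+1}}^m(\overline{a}\,\overline{x})\neq 0$ but $\mathrm{tr}_{e_j}^m(\overline{a}\,\overline{x})=0$ for $j\le t$, then (using $1+\sum_{j=1}^{t}\overline{\gamma_j}^2\neq 0$) gives $\mathrm{tr}_{e_{t+1}}^m(\overline{a}\,\overline{x})\in\mathbb{F}_{2^{e_t}}$ and hence $0=\mathrm{tr}_{e_t}^m(\overline{a}\,\overline{x})=\mathrm{tr}_{e_{t+1}}^m(\overline{a}\,\overline{x})$, a contradiction. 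Thus $\mathrm{rad}(B_{f_a})=\{0\}$ and $f_a$ is generalized bent by Lemma~\ref{gbent}.

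Finally, for the cardinality: if $f_a=f_b$ with $a\neq b$ then $B_{f_a-f_b}\equiv 0$, so $\mathrm{rad}(B_{f_a-f_b})=F\neq\{0\}$ (since $m\ge 1$), contradicting Lemma~\ref{radB}; hence $a\mapsto f_a$ is injective on $F^*$ and $|S|=|F^*|=2^m-1=n$. Assembling (i)--(iii) gives the theorem. The only genuine mathematical content lies in the triviality of the radical, which is precisely Lemma~\ref{radB} and already done; so within the proof of the theorem itself I expect no real obstacle, the one point needing care being the legitimacy of the $b=0$ specialization of that lemma (and, more mundanely, verifying the two quadratic-form axioms for $f_a$).
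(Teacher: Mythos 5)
Your proof is correct and follows the same route as the paper's: reduce generalized bentness of $f_a-f_b$ to full rank via Lemma~\ref{gbent} and then invoke Lemma~\ref{radB} to get $\mathrm{rank}(B_{f_a-f_b})=m$. You are in fact more thorough than the paper, whose proof of Theorem~\ref{fgbent} treats only the differences and leaves implicit both the generalized bentness of each individual $f_a$ (your $b=0$ specialization, which is legitimate since Lemma~\ref{radB} is stated for $a,b\in F$ and its trace-reduction argument survives deleting the $b$-terms) and the cardinality claim $|S|=2^m-1$.
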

\begin{proof}
We just need to prove $f_a-f_b$ is
generalized bent for
two distinct $a$ and $b$ in $S$.
From Lemma \ref{gbent}, $f_a-f_b$ is generalized
bent if and only if $rank(f_a-f_b)=m$.
From Lemma \ref{radB}, we have
$rank(f_a-f_b)=rank(B_{f_a-f_b})=m-
dim(rad(B_{f_a-f_b}))=m$. Hence, this
theorem follows.
\end{proof}
We construct optimal codebooks from the set
$S=\{f_a: a\in F^*\}$ in the following theorem.
\begin{theorem}
Let $S=\{f_a: a\in F^*\}$, where $f_a$
is defined in (\ref{f_a}). Then the codebook
$\mathcal{C}_S$ constructed in
(\ref{codebook})  is a
$(2^{2m}+2^m,2^m)$ codebook with
$I_{max}(\mathcal{C}_{S})
=\frac{1}{\sqrt{2^m}}$ and alphabet size $6$,
which is an optimal codebook achieving
 the Levenshtein bound.
\end{theorem}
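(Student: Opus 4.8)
The plan is to obtain the statement as a direct corollary of Theorem~\ref{fgbent} and Theorem~\ref{optimal}, so the argument is short. First I would record that the map $a \mapsto f_a$ is injective on $F^*$: if $f_a = f_b$ for some $a \neq b$, then $f_a - f_b$ is identically zero, its associated bilinear form vanishes, and its rank is $0$; but Theorem~\ref{fgbent} (via Lemma~\ref{radB}) gives $rank(f_a - f_b) = m \geq 1$, a contradiction. Hence $|S| = |F^*| = 2^m - 1$.

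Next I would check that $S$ meets the hypotheses imposed on the set $QF$ in Theorem~\ref{optimal}. Theorem~\ref{fgbent} states exactly that: $S$ is an $n$-set of generalized bent $\mathbb{Z}_4$-valued quadratic forms from $F$ to $\mathbb{Z}_4$ (in particular each individual $f_a$ is generalized bent, which is the degenerate case of the radical computation in Lemma~\ref{radB} with the second function taken to be $0$), such that the difference of any two distinct members is again generalized bent, and by the count above $n = 2^m - 1$. Applying Theorem~\ref{optimal} with $QF = S$, the codebook $\mathcal{C}_S$ of (\ref{codebook}) is a $((n+1)2^m + 2^m, 2^m)$ codebook with $I_{max}(\mathcal{C}_S) = \frac{1}{\sqrt{2^m}}$ and alphabet size $6$. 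Substituting $n + 1 = 2^m$ yields the parameter $(2^{2m} + 2^m, 2^m)$, and since $n = 2^m - 1$ the ``if and only if'' clause of Theorem~\ref{optimal} certifies that $\mathcal{C}_S$ attains the Levenshtein bound, i.e. is optimal.

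All the substantive work has already been carried out: verifying that the $f_a$ are $\mathbb{Z}_4$-valued quadratic forms, computing their associated bilinear forms (Lemmas~\ref{Bfa}--\ref{Bfab}), and, the real obstacle, proving $rad(B_{f_a - f_b}) = \{0\}$ by the telescoping trace argument that exploits $e_i \mid e_{i+1}$, the oddness of each $f_j$, and the nonvanishing of $1 + \sum_{j=1}^{t}\overline{\gamma_j}^2$ (Lemma~\ref{radB}, hence Theorem~\ref{fgbent}). Consequently there is no genuine difficulty left in this theorem itself; it is pure bookkeeping. The one point meriting an explicit line is the count $|S| = 2^m - 1$, which is precisely what pushes $n$ to the exact value $2^m-1$ that makes $\mathcal{C}_S$ optimal in Theorem~\ref{optimal} rather than merely near-optimal.
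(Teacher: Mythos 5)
Your proposal is correct and follows exactly the paper's own (one-line) argument: the theorem is deduced by combining Theorem~\ref{fgbent} with Theorem~\ref{optimal}, using $n=2^m-1$ so that $(n+1)2^m+2^m=2^{2m}+2^m$ and the optimality criterion $n=2^m-1$ is met. Your extra remarks on the injectivity of $a\mapsto f_a$ and on each individual $f_a$ being generalized bent (via Lemma~\ref{radB} with $b=0$) are harmless elaborations of points the paper leaves implicit.
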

\begin{proof}
From Theorem \ref{optimal} and Theorem
\ref{fgbent}, this theorem follows.
\end{proof}
\begin{remark}
Let $\eta\in F$. Then for any $a\in F^*$, we have the following $\mathbb{Z}_4$-valued quadratic form:
\begin{align}\label{f'_a}
f'_a(x)=\mathrm{Tr}_1^m(a(1+2\eta)x)+
2\sum_{j=1}^{l-1}Q_j(\gamma_j ax).
\end{align}
From similar proof, we have that
the set $S'=\{f'_a(x): a\in F^*\}$ also satisfies
properties in Theorem \ref{fgbent}. Then
$\mathcal{C}_{S'}$ constructed in
(\ref{codebook})  is a
$(2^{2m}+2^m,2^m)$ codebook with
$I_{max}(\mathcal{C}_{S'})
=\frac{1}{\sqrt{2^m}}$ and alphabet size $6$,
which is an optimal codebook achieving
 the Levenshtein bound. When $l=2$,
 $e_1=t$ and $f_1=s$, we have the functions
   $\mathrm{Tr}_1^m(a(1+2\eta)x)+
2 Q_1(\gamma_1 ax)$, which are functions in \cite{HY17}.
These functions are used to construct
optimal families  of quadriphase sequences
\cite{JHTZ09}.
\end{remark}

We will give a connection between generalized bent $f'_a(x)$ and Boolean bent functions.
A $\mathbb{Z}_4$-valued form
$Q(x)$ has the representation
$$
Q(x)=q_1(\overline{x})+2q_2(\overline{x}),
$$
where $q_1$ and $q_2$ are two Boolean functions. We have the following Gray map
$\phi(Q)=q_1(u)v+q_2(u)$, which is a
Boolean function defined over
$\mathbb{F}_{2^m}\times \mathbb{F}_2$.
Then we have the following proposition.
\begin{proposition}
Let $m$ be odd and $f_a'(x)$ be a $\mathbb{Z}_4$-valued
quadratic form defined in (\ref{f'_a}). Then
$\phi(f_a')$ is a Boolean bent function over
$\mathbb{F}_{2^m}\times \mathbb{F}_2$.
\end{proposition}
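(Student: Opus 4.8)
The plan is to deduce the ordinary bentness of $\phi(f'_a)$ from the generalized bentness of $f'_a$, by relating the $\mathbb{Z}_4$-Walsh transform $\chi_{f'_a}$ to the Walsh--Hadamard transform of $\phi(f'_a)$.

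First I would record that $f'_a$ is itself a generalized bent $\mathbb{Z}_4$-valued quadratic form: it has full rank $m$ (as noted in the preceding Remark, where $S'$ is shown to satisfy the properties of Theorem~\ref{fgbent}; alternatively one argues exactly as in Lemma~\ref{radB}), so Lemma~\ref{gbent} gives $|\chi_{f'_a}(\lambda)|^2=2^m$ for all $\lambda\in F$. Next, write $f'_a(x)=q_1(\overline{x})+2q_2(\overline{x})$ with $q_1,q_2$ Boolean functions on $\mathbb{F}_{2^m}$, so that $\phi(f'_a)(u,v)=q_1(u)v+q_2(u)$ on $\mathbb{F}_{2^m}\times\mathbb{F}_2$. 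Using $2\mathrm{Tr}_1^m(\lambda x)=2\mathrm{tr}_1^m(\overline{\lambda}\,\overline{x})$ for $\lambda,x\in F$, the relation $(\sqrt{-1})^{q_1(\overline{x})+2q_2(\overline{x})}=(\sqrt{-1})^{q_1(\overline{x})}(-1)^{q_2(\overline{x})}$, and the expansion $(\sqrt{-1})^{q_1(u)}=\tfrac{1+\sqrt{-1}}{2}+\tfrac{1-\sqrt{-1}}{2}(-1)^{q_1(u)}$, one rewrites
\[
\chi_{f'_a}(\lambda)=\sum_{u\in\mathbb{F}_{2^m}}(\sqrt{-1})^{q_1(u)}(-1)^{q_2(u)+\mathrm{tr}_1^m(\overline{\lambda}u)}=\frac{A(\overline{\lambda})+B(\overline{\lambda})}{2}+\sqrt{-1}\,\frac{A(\overline{\lambda})-B(\overline{\lambda})}{2},
\]
where $A(w)=\sum_{u}(-1)^{q_2(u)+\mathrm{tr}_1^m(wu)}$ and $B(w)=\sum_{u}(-1)^{q_1(u)+q_2(u)+\mathrm{tr}_1^m(wu)}$ are the (integer) Walsh--Hadamard coefficients of $q_2$ and of $q_1+q_2$. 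Taking moduli gives $|\chi_{f'_a}(\lambda)|^2=\tfrac12\big(A(\overline{\lambda})^2+B(\overline{\lambda})^2\big)$, and since $\mu$ maps $F$ bijectively onto $\mathbb{F}_{2^m}$, generalized bentness yields $A(w)^2+B(w)^2=2^{m+1}$ for every $w\in\mathbb{F}_{2^m}$.

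Then I would invoke the elementary fact that, for \emph{even} $k$, the only way to write $2^k$ as a sum of two squares of integers is $2^k=(2^{k/2})^2+0^2$ (by $2$-adic descent: any solution of $a^2+b^2=2^k$ with $k\ge2$ forces $a,b$ both even, and the cases $k\in\{0,1\}$ are trivial). Since $m$ is odd, $k=m+1$ is even, so for every $w$ exactly one of $A(w),B(w)$ vanishes and the other has absolute value $2^{(m+1)/2}$; in particular $A(w)B(w)=0$. Finally, a direct computation of the Walsh--Hadamard transform of $g:=\phi(f'_a)$ at $(w,\varepsilon)\in\mathbb{F}_{2^m}\times\mathbb{F}_2$, summing out $v$ first, gives
\[
\sum_{u\in\mathbb{F}_{2^m}}\sum_{v\in\mathbb{F}_2}(-1)^{q_1(u)v+q_2(u)+\mathrm{tr}_1^m(wu)+\varepsilon v}=A(w)+(-1)^{\varepsilon}B(w),
\]
whose square is $A(w)^2+B(w)^2+2(-1)^{\varepsilon}A(w)B(w)=2^{m+1}$. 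Hence every Walsh--Hadamard coefficient of $\phi(f'_a)$ has absolute value $2^{(m+1)/2}$, and as $\mathbb{F}_{2^m}\times\mathbb{F}_2$ has even dimension $m+1$, this is exactly the statement that $\phi(f'_a)$ is bent.

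I expect the only genuine obstacle to be the number-theoretic step, which is precisely where the hypothesis ``$m$ odd'' enters: it is what forces $A(w)B(w)=0$ and thereby splits the total mass $2^{m+1}$ equally between $\varepsilon=0$ and $\varepsilon=1$. (For $m$ even the same computation gives $|A(w)|=|B(w)|=2^{m/2}$, and $\phi(f'_a)$ would in general fail to be bent.) The remaining ingredients — the bookkeeping between $\mathrm{Tr}_1^m$ and $\mathrm{tr}_1^m$ through the Teichmuller set and the isomorphism $\mu$, and the two Walsh-transform computations — are routine.
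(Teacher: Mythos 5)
Your proof is correct, but it takes a genuinely different route from the paper's. The paper's argument is essentially a citation: it makes the decomposition $f_a'(x)=f_1(\overline{x})+2f_2(\overline{x})$ explicit via $\mathrm{Tr}_1^m(x)=\mathrm{tr}_1^m(\overline{x})+2q(\overline{x})$ and then invokes the theorem of Sol\'e and Tokareva \cite{ST09} that $Q=q_1+2q_2$ is generalized bent if and only if $\phi(Q)$ is bent. You instead re-derive the needed direction of that equivalence from first principles: the identity $\chi_{f_a'}(\lambda)=\frac{1+\sqrt{-1}}{2}A(\overline{\lambda})+\frac{1-\sqrt{-1}}{2}B(\overline{\lambda})$ with $A,B$ the Walsh--Hadamard transforms of $q_2$ and $q_1+q_2$, the consequence $A(w)^2+B(w)^2=2^{m+1}$, the $2$-adic two-squares argument forcing $A(w)B(w)=0$ because $m+1$ is even, and the computation $W_{\phi(f_a')}(w,\varepsilon)=A(w)+(-1)^{\varepsilon}B(w)$; all of these steps check out. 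What your version buys is self-containedness and a transparent account of exactly where the hypothesis that $m$ is odd enters --- in the paper this is invisible, being buried inside the citation. Both arguments rest on the fact that $f_a'$ itself is generalized bent (stated in the Remark via Theorem \ref{fgbent}, though the paper only proves full rank for differences $f_a-f_b$); you flag this point and correctly indicate that the rank argument of Lemma \ref{radB} is the appropriate justification.
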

\begin{proof}
Note that $\mathrm{Tr}_1^m(x)=
\mathrm{tr}_1^m(\overline{x})+
2q(\overline{x})$ \cite{HKCSS94}, where
$q(\overline{x})=
\sum_{i=1}^{\frac{m-1}{2}}\mathrm{tr}_1^m(
x^{2^i+1})$. Hence,
$f_a'(x)$ has the representation
$f_a'(x)=f_1(\overline{x})+2f_2(\overline{x})$, where $f_1$ and $f_2$ are two Boolean functions. In \cite{ST09}, $Q(x)=q_1(\overline{x})+2q_2(\overline{x})$ is generalized bent if and only if
$\phi(Q)$ is bent. Hence, this proposition follows.
\end{proof}

\section{Conclusion}

We present an $n$-set of generalized
bent $\mathbb{Z}_4$-valued quadratic forms such that any difference of two different elements in this set is also generalized bent, where
$n=2^m-1$. Using this $n$-set, we construct optimal codebooks achieving the Levenshtein bound, which have parameters $(2^{2m}+2^m,2^m)$ and alphabet $6$. It is interesting to
construct optimal or almost optimal codebooks with different parameters from other tools.

{\bf Acknowledgement.}
This work was supported by SECODE project  and the National Natural Science Foundation of China
(Grant No. 11871058, 11531002, 11701129).  C. Tang also acknowledges support from
14E013, CXTD2014-4 and the Meritocracy Research Funds of China West Normal University. Y. Qi also acknowledges support from Zhejiang provincial Natural Science Foundation of China (LQ17A010008, LQ16A010005).

\end{document}